\documentclass[copyright,creativecommons,sharealike,noncommercial]{eptcs}
 
\usepackage{breakurl}             

\overfullrule=1mm

\usepackage[utf8]{inputenc} 
\usepackage[T1]{fontenc} 

\usepackage{enumitem}

\usepackage{tikz}
\usetikzlibrary{calc,patterns}
\usepackage{tikzscale} 
\usepackage{pgfplots}
\pgfplotsset{compat=newest}
\usepgfplotslibrary{groupplots}

\pgfplotscreateplotcyclelist{colorlist}{
{color=map1, mark=x, mark size=1.75},
{color=map2, mark=*, mark size=1},
{color=map3, mark=square*, mark size=1},
{color=map4, mark=triangle*, mark size=1},
}

\usepackage{caption}
\usepackage{subcaption} 

\usepackage{csquotes} 

\usepackage{amsmath} 
\usepackage{amssymb} 
\usepackage{amsthm} 
\usepackage{mathtools} 

\newtheorem{theorem}{Theorem}
\newtheorem{lemma}{Lemma}

\theoremstyle{definition}
\newtheorem{definition}{Definition}

\newenvironment{sproof}{
    \proof}{\endproof}

\usepackage{graphicx}
\usepackage{color} 

\definecolor{middlegray}{rgb}{0.3,0.3,0.3}
\definecolor{lightgray}{rgb}{0.7,0.7,0.7}
\definecolor{red}{rgb}{.8,0,0}
\definecolor{blue}{rgb}{0,0,0.5}
\definecolor{lightblue}{rgb}{0.3,0.5,0.7}
\definecolor{yac}{rgb}{0.4,0.4,0.1}
\definecolor{green}{rgb}{0,0.5,0.1}
\definecolor{purple}{rgb}{0.4,0.05,0.25}
\definecolor{yellow}{rgb}{0.8,0.8,0.3}

\definecolor{map1}{RGB}{166,206,227}
\definecolor{map2}{RGB}{31,120,180}
\definecolor{map3}{RGB}{178,223,138}
\definecolor{map4}{RGB}{51,160,44}

\newcommand{\cd}[1]{\lstinline[style=stHaskellInline,keepspaces=true,breaklines=false]$#1$}

\def\lstsmallmath{\leavevmode\ifmmode \scriptscriptstyle \else  \fi}
\def\lstsmallmathend{\leavevmode\ifmmode  \else  \fi}

\usepackage{listings} 

\lstloadlanguages{Haskell}
\lstset{literate=
{Ö}{{\"O}}1
{Ä}{{\"A}}1
{Ü}{{\"U}}1
{ß}{{\ss}}2
{ü}{{\"u}}1
{ä}{{\"a}}1
{ö}{{\"o}}1
}

\lstnewenvironment{code}[1][]
   {
     \lstset{style=stHaskell,
      xleftmargin=0pt,
      xrightmargin=0pt,
      #1
    }} {}

\lstset{escapeinside={--c}{c--}}

\lstdefinestyle{stHaskell}{
      basicstyle=\footnotesize\ttfamily,
      stringstyle=\color{purple}\ttfamily,
      commentstyle=\color{middlegray}\ttfamily\slshape,
      numberstyle={\ttfamily},
      lineskip=-0.1ex,
      belowskip=8pt,
      frame=single,
      escapebegin={\lstsmallmath}, escapeend={\lstsmallmathend},
      language=Haskell,
      keywordstyle=[1]\color{yac}\ttfamily,
      deletekeywords=[1]{putStr,map,array,data,random,lenght,zipWith,concat,uncurry,zip,reverse,length,div,List,System,Random,getArgs,read,putStrLn,show,take,randoms,mkStdGen},
      morekeywords=[2]{prodBList},
      keywordstyle=[2]\color{red}\ttfamily,
      morekeywords=[3]{bMerge},
      keywordstyle=[3]\color{blue}\ttfamily,
      otherkeywords={},
      morekeywords=[4]{bSplit},
      keywordstyle=[4]\color{green}\ttfamily,
      emph=[1]{bSort_0}, 
      emphstyle=[1]\color{blue},
      emph=[2]{bitonic_merger_0}, 
      emphstyle=[2]\color{red},
      literate={+}{{$\scriptstyle +$}}1 {/}{{$/$}}1 {*}{{$*$}}1 {=}{{$=$}}1
               {>}{{$>$}}1 {<}{{$<$}}1 {\\}{{$\lambda$}}1 {\\n}{{\textbackslash
               n}}2 {\\\\}{{\char`\\\char`\\}}1
               {->}{{$\rightarrow$}}2 {>=}{{$\geq$}}2 {<-}{{$\leftarrow$}}2
               {<=}{{$\leq$}}2 {=>}{{$\Rightarrow$}}2
               {\ .}{{$\scriptstyle \circ$}}2 {\ .\ }{{\  $\scriptstyle\circ$\ }}2
               {>>}{{>>}}2 {>>=}{{>>=}}2
               {|}{{$\mid$}}1
               {Ö}{{\"O}}1
               {Ä}{{\"A}}1
               {Ü}{{\"U}}1
               {ß}{{\ss}}2
               {ü}{{\"u}}1
               {ä}{{\"a}}1
               {ö}{{\"o}}1
           } 

\lstdefinestyle{stHaskellInline}{
      basicstyle=\ttfamily,
      keywordstyle=\ttfamily,
      lineskip=-0.1ex,
      frame=single,
      language=Haskell,
      literate={+}{{$+$}}1 {/}{{$/$}}1 {*}{{$*$}}1 {=}{{$=$}}1
               {>}{{$>$}}1 {<}{{$<$}}1 {\\}{{$\lambda$}}1 {\\n}{{\textbackslash
               n}}2 {\\\\}{{\char`\\\char`\\}}1
               {->}{{$\rightarrow$}}2 {>=}{{$\geq$}}2 {<-}{{$\leftarrow$}}2
               {<=}{{$\leq$}}2 {=>}{{$\Rightarrow$}}2
               {\ .}{{$\circ$}}2 {\ .\ }{{\ $\circ$\ }}2
               {>>}{{>>}}2 {>>=}{{>>=}}2
               {|}{{$\mid$}}1
               {|}{{$\mid$}}1
               {Ö}{{\"O}}1
               {Ä}{{\"A}}1
               {Ü}{{\"U}}1
               {ß}{{\ss}}2
               {ü}{{\"u}}1
               {ä}{{\"a}}1
               {ö}{{\"o}}1
}

\DeclareMathOperator{\s}{sN}
\DeclareMathOperator{\concat}{concat}

\title{An Agglomeration Law for Sorting Networks and its Application in Functional Programming}  

\author{Lukas Immanuel Schiller
\institute{Philipps-Universit\"at Marburg}
\institute{Fachbereich Mathematik und Informatik}
\email{schiller@mathematik.uni-marburg.de}}

\begin{document}
\maketitle

\begin{abstract}
    In this paper we will present a general agglomeration law for sorting networks. Agglomeration is a common technique when designing parallel programmes to control the granularity of the computation thereby finding a better fit between the algorithm and the machine on which the algorithm runs. Usually this is done by grouping smaller tasks and computing them en bloc within one parallel process. In the case of sorting networks this could be done by computing bigger parts of the network with one process. The agglomeration law in this paper pursues a different strategy: The input data is grouped and the algorithm is generalised to work on the agglomerated input while the original structure of the algorithm remains. 
    This will result in a new access opportunity to sorting networks well-suited for efficient parallelization on modern multicore computers, computer networks or GPGPU programming. Additionally this enables us to use sorting networks as (parallel or distributed) merging stages for arbitrary sorting algorithms, thereby creating new hybrid sorting algorithms with ease. The expressiveness of functional programming languages helps us to apply this law to systematically constructed sorting networks, leading to efficient and easily adaptable sorting algorithms. An application example is given, using the Eden programming language to show the effectiveness of the law. The implementation is compared with different parallel sorting algorithms by runtime behaviour. 
\end{abstract}

\section{Introduction}
With the increased presence of parallel hardware the demand for parallel algorithms increases accordingly. Naturally this demand includes sorting algorithms as one of the most interesting tasks of computer science. A particularly interesting class of sorting algorithms for parallelization is the class of {\it oblivious} algorithms. We will call a parallel algorithm oblivious ``iff its communication structure and its communication scheme are the same for all inputs the same size''~\cite{lowe}.

Sorting networks are the most important representative of the class of oblivious algorithms. They have been an interesting field of research since their introduction by Batcher~\cite{batcher68} in 1968 and are experiencing a renaissance in GPGPU programming~\cite{gpu}.  
They are based on comparison elements, mapping their inputs $(a_1, a_2) \mapsto (a_1', a_2')$ with  $a_1' = \min(a_1,a_2)$ and $a_2' = \max(a_1,a_2) $ and therefore $a_1' \leq a_2'$. A simple graphical representation is shown in \autoref{fig:compelem_00}. The arrowhead in the box indicates where the minimum is output. 

\begin{figure}[htb]
    \centering
    \includegraphics[height=0.1\textwidth]{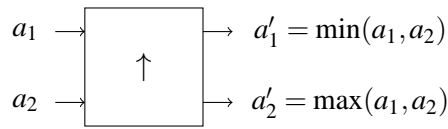}
    \caption{Comparison element (ascending).}
    \label{fig:compelem_00}
\end{figure}

A simple functional description of sorting networks results in a repeated application of this comparison element function with fixed indices for every step. For a sequence $(a_1, \ldots, a_n)$ of length $n$ the specific steps are fixed: $$ (a_1, \ldots, a_n) \mapsto \ldots \mapsto (\tilde{a_1},\ldots,a_i,\ldots,a_j,\ldots,\tilde{a_n}) \mapsto (\tilde{a_1},\ldots,a_i',\ldots,a_j',\ldots,\tilde{a_n}) \mapsto \ldots \mapsto (a_1',\ldots,a_n') $$
with $i \neq j $. In a specific step $a_i$ and $a_j$ are sorted with a comparison element. Ultimatly resulting in the sorted sequence $(a_1',\ldots, a_n')$.

\autoref{fig:compsimplesort} shows a simple sorting network for lists of length 4. For every permutation of the input $ (a_1, \dots, a_4) $  the output $ (a'_1, \dots, a_4') $ is sorted -- the comparisons are independent of the data base. 
Notice the obvious inherent parallelism in the first two steps of the sorting network. The
restriction to a fixed structure of comparisons results in an easily predictable behaviour and easily detectable parallelism.

\begin{figure}[htb]
    \centering
    \includegraphics[width=0.8\textwidth]{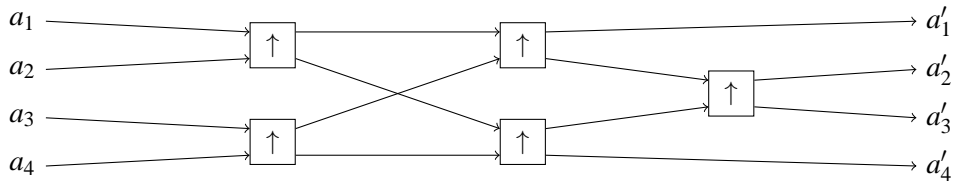}
    \caption{Simple sorting network with comparison elements. Source: \protect \cite{Knuth98}}
    \label{fig:compsimplesort}
\end{figure}

Some well-known sorting algorithms, for example Bubble Sort~\cite{Knuth98}, can be described as sorting networks. Especially in the case of recursively constructed sorting networks (e.g.\ Batcher's Bitonic Sort or Batcher's Odd-Even-Mergesort), with their inherent functional structure, an obviously correct description of the algorithm is easily possible in a functional programming language such as Haskell~\cite{haskell2010}.  

In practice straightforward implementations of these algorithms often struggle with too fine a granularity of computation and therefore do not scale well. Agglomerating parts of the algorithm is a common step in dealing with this problem when designing parallel programmes (compare Foster's PCAM method~\cite{foster}). 
With recursive algorithms for example it is a common technique to agglomerate branches of the recursive tree by parallelising only until a specific depth of recursion. 
With a coarser granularity the computation to communication ratio improves. A common agglomeration for sorting networks is to place blocks or rows of comparison elements in one parallel process. 

In this paper we discuss a different approach. We will agglomerate the input data and alter the comparison element to work on blocks of data. This approach is not based upon the structure of a specific sorting network and can therefore be applied to any sorting network. At the same time we will see that the limited nature of sorting networks is necessary for this modification to be correct. The application of this transformation will open a different access to sorting networks, allowing easy combination with other sorting algorithms. Working on data structures instead of single elements leads to a suitable implementation for modern multi-core computers, GPGPU concepts or computer networks. We will obtain an adequate granularity of computation and the width of the sorting network can correspond with the number of processor units. A second layer of traditional agglomeration (e.g.\ blocks or rows of comparison elements) is independently possible. 

In \autoref{agglo} we will discuss which demands are necessary for altered comparison elements to preserve an algorithm's functionality and correctness. In \autoref{bsp} an example is given showing situations in which the application of this agglomeration is beneficial and tests with different approaches are evaluated. \autoref{relwork} discusses related work 
and \autoref{conc} concludes.

\section{Agglomeration Law for Sorting Networks}
\label{agglo}
In general, sorting networks work on sequences of elements $A = (a_1,\ldots,a_n)$.  
Our improvement will work with a partition of a given sequence. 
In the following, we will use Haskell notation and lists instead of sequences to improve readability, even though a more general type would be possible.

\begin{theorem}[Agglomeration Law for Sorting Networks]
    \label{theo:1}
    Let $\mathtt{A = [a_1,\ldots,a_n]}$ be a sequence where a total order ``$\leq$'' is defined on the elements,  
    $\mathtt{c :: Ord\; a \Rightarrow (a,a) \rightarrow (a,a)}$ a comparison element as described before and 
    $$\text{\cd{sN :: ((b,b) -> (b,b)) -> [b] -> [b]}}$$
    a correct sorting network, meaning $\mathtt{sN\; c\; A = A'}$ with $\mathtt{A' = [a_1', \ldots , a_n']}$ and $\mathtt{a_1' \leq \ldots \leq a_n'}$ where $\mathtt{a_1',\ldots,a_n'}$ is a permutation of $\mathtt{a_1,\ldots,a_n}$ and the only operation used by the sorting network is a repeated application of the comparison element with a fixed, data independent structure for a given input size.
    Then there exists a comparison element $\mathtt{\it c' :: Ord\; a \Rightarrow ([a],[a]) \rightarrow ([a],[a])}$ with which a sequence of sequences $\mathfrak{A} = [A_1,\ldots,A_n]$ with $A_i = [a_{i1},\ldots,a_{in_{i}}]$ can be sorted with the same sorting network. 
    Meaning that $\mathtt{\s\; c'\; \mathfrak{A} = \mathfrak{A'}}$ with $\mathtt{\mathfrak{A'} = [A_1',\ldots,A_n']}$ and $\mathtt{A_1' \preceq \ldots \preceq A_n'}$. Where $\concat\; \mathfrak{A'}$ is a permutation of $\concat\; \mathfrak{A}$ and $A \preceq B$ means that for two sequences $A = [a_1,\ldots,a_p]$ and $B = [b_1,\ldots,b_q]$ every element of $A$ is less than or equal to every element from $B$: $$ A \preceq B \Leftrightarrow \forall a \in A, \forall b \in B : a \leq b $$ 
\end{theorem}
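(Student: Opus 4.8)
The plan is to split the conclusion into two independent claims — that $\concat\,\mathfrak{A'}$ is a permutation of $\concat\,\mathfrak{A}$, and that the output blocks are $\preceq$-ordered — and to pin down a concrete witness $c'$ first. The natural (and essentially forced) choice is the \emph{merge-and-resplit} comparator: given two blocks $A,B$, form $s=\mathrm{sort}(A\mathbin{+\!\!+}B)$, the combined multiset in nondecreasing order, and return $c'(A,B)=(\mathrm{take}\,|A|\;s,\ \mathrm{drop}\,|A|\;s)$. By construction this keeps the two block lengths fixed, keeps the combined multiset fixed, and forces every element of the first output below every element of the second, i.e.\ $A'\preceq B'$.

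The permutation claim is then a routine induction over the comparator schedule. By hypothesis \cd{sN} is a composition of applications of the comparison element at fixed, data-independent wire positions; replacing $c$ by $c'$ does not touch that schedule, so $\s\,c'\,\mathfrak{A}$ is a composition of applications of $c'$. Each application preserves both the size profile $(n_1,\dots,n_n)$ and the global multiset, so the final $\mathfrak{A'}$ is well formed with the same block lengths and $\concat\,\mathfrak{A'}$ is a permutation of $\concat\,\mathfrak{A}$.

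For the ordering claim I would run a block-level $0$--$1$ principle. The engine is that $c'$ commutes with any nondecreasing $g$ applied elementwise to every block: since $g$ preserves order we have $\mathrm{sort}(g\cdot(A\mathbin{+\!\!+}B))=g\cdot\mathrm{sort}(A\mathbin{+\!\!+}B)$, while $\mathrm{take}$ and $\mathrm{drop}$ at the fixed index $|A|$ commute with any elementwise relabelling; hence $c'(g\cdot A,g\cdot B)=(g\cdot A',g\cdot B')$, and because the schedule is fixed this lifts to $\s\,c'\,(g\cdot\mathfrak{A})=g\cdot(\s\,c'\,\mathfrak{A})$. Instantiating $g$ with the threshold $g_t(a)=\text{``}a>t\text{''}$ gives the usual contrapositive: were two elements left inverted in $\concat\,\mathfrak{A'}$, a threshold placed between their values would exhibit a $0$--$1$ block input whose output concatenation is unsorted. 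For $0$--$1$ block inputs, $\preceq$-ordering of the blocks and sortedness of their concatenation coincide, so it remains to prove that $\s\,c'$ concatenation-sorts every $0$--$1$ block input.

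I expect that last step to be the real work. The subtlety is that on $0$--$1$ data $c'$ is \emph{not} a bank of independent Boolean comparators on the individual slots: it packs the ones of two blocks into the right-hand block only up to that block's capacity and spills the remainder back, and when the $n_i$ differ this overflow destroys any block-to-bit projection that would let me quote the scalar hypothesis $\s\,c=$``sorts $0$--$1$'' wire-for-wire. My plan is to avoid such a projection and instead argue directly on the $0$--$1$ block configurations ordered by prefix-majorization (configuration $X$ below $Y$ when every block-prefix of $X$ carries at least as many ones as the corresponding prefix of $Y$): I would check that every $c'$ is monotone for this order and can only move a configuration towards the packed-right extremum, that the packed-right configuration is the unique fixed point, and that the scalar network's sorting property guarantees the fixed schedule actually reaches it rather than stalling elsewhere. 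Reconciling this monotone/exchange argument with the capacity arithmetic $(a,b)\mapsto(\max(0,a+b-n_j),\ \min(n_j,a+b))$ for unequal block sizes is the step I would budget the most care for, and it is where I would expect the hypothesis that \cd{sN} sorts \emph{every} scalar input — not merely some of them — to be indispensable.
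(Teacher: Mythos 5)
Your choice of witness is where the argument breaks: with blocks of unequal length the size-preserving merge-and-resplit comparator does \emph{not} sort, so the theorem is false for your $c'$ and no completion of the $0$--$1$ step can rescue it. Concretely, take the three-wire sorting network with comparator schedule $(1,2),(2,3),(1,2)$ and input blocks $A_1=[1]$, $A_2=[1]$, $A_3=[0,0]$. Your comparator leaves the first pair unchanged, then gives $c'([1],[0,0])=([0],[0,1])$ on wires $(2,3)$, then $c'([1],[0])=([0],[1])$ on wires $(1,2)$; the final configuration $([0],[1],[0,1])$ is not $\preceq$-ordered (block two holds a $1$, block three holds a $0$) and its concatenation $[0,1,0,1]$ is unsorted. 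The failure is exactly a violation of the paper's definition of a \emph{valid} comparison element: all elements below $lb=\max(\min A_1,\min A_2)$ must go to the first output and all above $ub=\min(\max A_1,\max A_2)$ to the second, so $c'([1],[0,0])$ is forced to return $([0,0],[1])$ --- the block lengths must be allowed to change. Your comparator guarantees $A_1'\preceq A_2'$ and multiset preservation, but it places an element smaller than $lb$ into the second output, and that breaks the paper's direct relation (\ref{eq:3}): the first comparator established $A_1\preceq A_2$, so validity must give $A_1\preceq$ the max-output of comparator $(2,3)$, yet $[1]\not\preceq[0,1]$; the final comparator $(1,2)$ silently relies on precisely that bound. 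As the paper notes after \autoref{lemma:comp}, the $lb$/$ub$ constraints are necessary, not merely sufficient. (For equal block sizes your $c'$ is valid and your remaining claim is the classical merge--split result; the theorem as stated, however, allows arbitrary $n_i$.)

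Two further points. First, the salvageable parts of your proposal --- the block-level $0$--$1$ reduction via commuting with monotone maps, and the multiset-preservation induction --- are correct as stated for your $c'$, and this is a genuinely different route from the paper's; but both must be redone for a valid, size-changing comparator: take/drop at a fixed index is no longer the splitting rule (so the commutation, including the ties a threshold map creates, must be re-verified), and the ``size profile is preserved'' half of your induction becomes false (only the global multiset is invariant). Second, even granting a valid comparator, your plan for the core $0$--$1$ claim is not yet a proof in outline: monotonicity toward the packed-right configuration plus uniqueness of the fixed point cannot conclude anything, because the network is a fixed finite schedule rather than an iteration run to convergence, and the step ``the scalar network's sorting property guarantees the schedule reaches the fixed point'' is the theorem restated with no transfer mechanism (you correctly observe that wire-to-bit projections fail, but nothing replaces them). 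The paper sidesteps this difficulty entirely with an information-preservation argument: \autoref{lemma:comp} shows that every valid block comparator maintains all \emph{direct relations} a scalar comparator establishes, \autoref{lemma:comp2} argues that the lost \emph{conditional relations} cannot be exploited by any data-oblivious network, and correctness follows because a sorting network's operation rests only on direct relations.
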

With blocks of data the concatenation of the elements of $\mathfrak{A'}$ need to be a permutation of the concatenation of $\mathfrak{A}$, the elements themselves ($\mathtt{A_1',\ldots,A_n'}$) do not need to be a permutation of $\mathtt{A_1,\ldots,A_n}$. 

Note that the order relation for blocks of data ``$\preceq$'' defines only a partial order whereas the elements inside the blocks are totally ordered. 
To this end we need to specialise  
the comparison element to deal with the case of overlapping or encasing blocks and still fulfill all properties necessary for the sorting network to work correctly (cf.\ \autoref{fig:overlap}). 

\begin{figure}[htb]
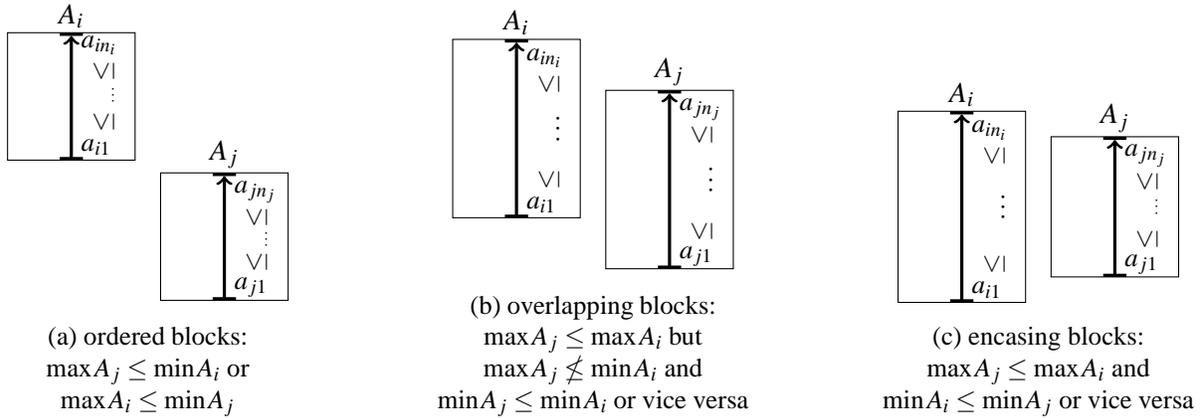

    \centering
    \begin{subfigure}[b]{0.26\textwidth}
        \centering
        \captionsetup{justification=centering}
        \includegraphics[width=0.9\linewidth]{figures/zerlegung_bloecke3.tikz}
        \caption{ordered blocks:\\ $\max A_j \leq \min A_i$ or  $\max A_i \leq \min A_j$}
        \label{fig:overlap0}
    \end{subfigure}
    \hfill
    \begin{subfigure}[b]{0.26\textwidth}
        \centering
        \captionsetup{justification=centering}
        \includegraphics[width=0.9\linewidth]{figures/zerlegung_bloecke0.tikz}
        \caption{overlapping blocks:\\ $\max A_j \leq \max A_i$ but $\max A_j \nleq \min A_i$ and $\min A_j \leq \min A_i$ or vice versa}
        \label{fig:overlap1}
    \end{subfigure}
    \hfill
    \begin{subfigure}[b]{0.26\textwidth}
        \centering
        \captionsetup{justification=centering}
        \includegraphics[width=0.9\linewidth]{figures/zerlegung_bloecke1.tikz}
        \caption{encasing blocks:\\ $\max A_j \leq \max A_i$ and $\min A_i \leq \min A_j$ or vice versa}
        \label{fig:overlap2}
    \end{subfigure}
    \caption[Overlapping and encasing blocks]{Cases for comparison elements for blocks of data: blocks can be ordered (with order relation $\preceq$), overlapping or encasing, where overlapping and encasing means that the blocks are not in an order relation between one another (meaning neither $\preceq$ nor $\succeq$ holds). }
    \label{fig:overlap}
\end{figure}

If for example the input lists overlap (e.g.\ $\mathtt{c'\; ([1,2,3,4],[3,4,5,6])}$) a simple swap would not fulfill the requirements. We would prefer a result such as $\mathtt{([1,2,3,3],[4,4,5,6])}$ and therefore $\mathfrak{A'}$ can not be a permutation of $\mathfrak{A}$ but we expect that every element $\mathtt{a_{ij}}$ from $\mathtt{A_1,\ldots,A_n}$ is in $\mathtt{A_1',\ldots,A_n'}$.
In the next step we will investigate which conditions a comparison element for blocks of data must fulfill. 

\setcounter{equation}{-1}
\subsection{Comparison element for partially ordered blocks of totally ordered elements}
If we want to alter the comparison element while preserving the functionality and correctness of the sorting network we must understand which information is generated and preserved within a traditional comparison element. We will therefore investigate the capabilities and limits of comparison elements for totally ordered sequences: Let $a_1, a_2, a_1^1, a_1^2, a_2^1, a_2^2$ be elements where information about the following relations have already been gathered by the sorting network: \begin{equation} a_1^1 \leq a_1 \leq a_1^2 \text{\hspace{2em} and\hspace{2em} } a_2^1 \leq a_2 \leq a_2^2 \label{eq:0} \end{equation}
If we do sort $a_1$ and $a_2$ with a comparison element $(a_1,a_2) \mapsto (a_1', a_2')$ we receive new relations (e.g.\ $a_1^1 \leq a_1 \Rightarrow a_1^1 \leq a_2'$). We will distinguish between {\it direct relations} and {\it conditional relations}. In this case direct relations refer to all relations resulting directly from the relations which exists and are known before the application of the comparison element 
and which involve $a_1,a_2,a_1' \text{ or } a_2'$. We expect the comparison element to be side-effect free and therefore we expect every relation between elements not touched by the comparison element to be unaffected by its application. Here the direct relations resulting from (\ref{eq:0}) are: 
\begin{eqnarray}  
    a_1' &\leq& a_2' \label{eq:1} \\
    a_1' &\leq& a_i^2, \; i \in \{1,2\} \label{eq:2}\\
    a_i^1 &\leq& a_2', \; i \in \{1,2\} \label{eq:3}
\end{eqnarray}
If we have additional information, we get additional direct relations. For $\{i,j\} = \{1,2\}$:
\begin{eqnarray}  
a_i^1 \leq a_j &\Rightarrow& a_i^1 \leq a_1' \label{eq:4} \\
a_j \leq a_i^2 &\Rightarrow& a_2' \leq a_i^2 \label{eq:5} \\
a_i \leq a_j   &\Rightarrow& a_i^1 \leq a_1' \land a_2' \leq a_j^2 \label{eq:direct} 
\end{eqnarray}
We call these relations \emph{direct relations} only if the left side is already known. 

\begin{definition}[Valid comparison elements for blocks of data]
    Let $\mathtt{A_1, A_2} $ %
    be sequences with a total order ``$\leq$'' defined on the elements and $\mathtt{c' :: Ord\; a \Rightarrow ([a],[a]) \rightarrow ([a],[a])}$ a block comparison element with $\mathtt{c' (A_1, A_2) = (A_1', A_2')}$ and 
    $A_1' \preceq A_2'$.
    
    $\mathtt{c'}$ is called \emph{valid}, iff all elements from $A_1$ and $A_2$ which are less than $\mathtt{ lb = \max(\min(A_1),\; \min(A_{2}))}$ 
   must be in $A_1'$, all elements which are greater than $\mathtt{ub = \min(\max(A_1),\; \max(A_{2}))}$  must be in $A_2'$ and all elements between these limits can be either in $A_1'$ or in $A_2'$ as long as every element in $A_1'$ is smaller than or equal to every element in $A_2'$ (cf.\ \autoref{fig:mergeSplit00}). 
\end{definition}
\begin{figure}[ht]
    \captionsetup{singlelinecheck=off}
    \centering
    \includegraphics[width=0.30\textwidth]{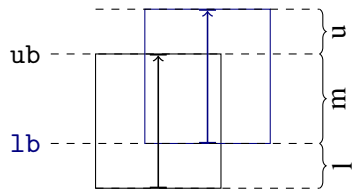}
    \caption{Sections of the comparison element for blocks of data. Elements from $l$ must be in the lesser result ($A_1'$), elements from $u$ must be in the greater result ($A_2'$) and elements from $m$ can be in either result as long as $A_1' \preceq A_2'$ holds. }
    \label{fig:mergeSplit00}
\end{figure}

\begin{lemma}
    \label{lemma:comp}
   \emph{Valid} block comparison elements maintain the \emph{direct relations} fulfilled by the elementary comparison elements. 
\end{lemma}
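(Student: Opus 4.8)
The plan is to verify, one relation at a time, that each direct relation \eqref{eq:1}--\eqref{eq:direct} established for elementary comparison elements survives the passage to blocks: single elements become blocks, the total order ``$\leq$'' becomes the block order ``$\preceq$'', and the elementary comparison element becomes a valid block comparison element $\mathtt{c'}$ with $\mathtt{c'}(A_1,A_2)=(A_1',A_2')$. Throughout I assume the block analogue of the premise \eqref{eq:0}, namely $A_1^1 \preceq A_1 \preceq A_1^2$ and $A_2^1 \preceq A_2 \preceq A_2^2$, and I write $A_1\cup A_2$ for the combined collection of elements of the two input blocks. Before the case analysis I would extract two facts from the definition of validity. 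First, every element of $A_1'$ is at most $\mathtt{ub}=\min(\max A_1,\max A_2)$ and every element of $A_2'$ is at least $\mathtt{lb}=\max(\min A_1,\min A_2)$ (the contrapositive of the two boundary conditions, since a value exceeding $\mathtt{ub}$ must land in $A_2'$ and a value below $\mathtt{lb}$ must land in $A_1'$). Second, $A_1'$ and $A_2'$ are sub-multisets of $A_1\cup A_2$, because a valid element neither creates nor destroys elements. I would also record that for nonempty blocks $A\preceq B$ is equivalent to $\max A\le\min B$.

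Relation \eqref{eq:1}, i.e.\ $A_1'\preceq A_2'$, is immediate, as it is literally part of the definition of a valid block comparison element.

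The heart of the proof is \eqref{eq:2} and \eqref{eq:3}, where the elementary ``$\min$/$\max$'' structure must be replaced by the boundary conditions of validity. For \eqref{eq:2}, fix $i\in\{1,2\}$ and take $x\in A_1'$ and $y\in A_i^2$; the first validity fact gives $x\le\mathtt{ub}\le\max A_i$, and the premise $A_i\preceq A_i^2$ gives $\max A_i\le\min A_i^2\le y$, so $x\le y$ and hence $A_1'\preceq A_i^2$. Relation \eqref{eq:3} is symmetric via $\mathtt{lb}$: for $y\in A_2'$ and $x\in A_i^1$ we have $y\ge\mathtt{lb}\ge\min A_i$, while $A_i^1\preceq A_i$ gives $x\le\max A_i^1\le\min A_i$, whence $x\le y$. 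This is the step I expect to be the main obstacle: for elementary elements \eqref{eq:2}--\eqref{eq:3} hold for free because $a_1'=\min$ and $a_2'=\max$, whereas here a valid element may distribute the ``middle'' elements freely between $A_1'$ and $A_2'$, and one must see that the $\mathtt{lb}$/$\mathtt{ub}$ definitions are precisely engineered so that the inherited bounds are still respected.

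The conditional relations \eqref{eq:4}, \eqref{eq:5} and \eqref{eq:direct} then follow cheaply from the second validity fact together with transitivity of ``$\preceq$''. For \eqref{eq:4}, assuming $A_i^1\preceq A_j$ and using the premise $A_i^1\preceq A_i$, every element of $A_i^1$ lies below every element of $A_1\cup A_2$; since $A_1'\subseteq A_1\cup A_2$, we conclude $A_i^1\preceq A_1'$. Relation \eqref{eq:5} is the dual statement for $A_2'$, and \eqref{eq:direct} combines both: from $A_i\preceq A_j$ one gets $A_i^1\preceq A_i\preceq A_j$ and $A_i\preceq A_j\preceq A_j^2$, so $A_i^1\preceq A_1\cup A_2$ and $A_1\cup A_2\preceq A_j^2$, and containment of $A_1',A_2'$ in $A_1\cup A_2$ yields $A_i^1\preceq A_1'$ and $A_2'\preceq A_j^2$. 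Collecting these six verifications gives the claim; the only fine point I would treat explicitly is the empty-block edge case, where $\max$ and $\min$ are undefined, by letting empty blocks satisfy every $\preceq$ relation vacuously.
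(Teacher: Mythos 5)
Your proof is correct and follows essentially the same route as the paper's: a relation-by-relation verification in which \eqref{eq:1} is definitional, \eqref{eq:2}--\eqref{eq:3} follow from the $\mathtt{ub}$/$\mathtt{lb}$ boundary conditions of validity, and \eqref{eq:4}--\eqref{eq:direct} follow from the fact that the output blocks are drawn from $A_1\cup A_2$ (the paper phrases this via the bounds $\min(\min A_1,\min A_2)$ and $\max(\max A_1,\max A_2)$). Your explicit extraction of the two validity facts and the empty-block remark are only presentational additions to the same argument.
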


\begin{proof}
    We show the validity of the above relations (\ref{eq:1}) to (\ref{eq:direct}) for blocks of data:
    \begin{enumerate}[labelindent=35pt,labelwidth=0.75em,leftmargin=!]
        \item $\mathtt{A_1' \preceq A_2'}$ is included in the definition.
        \item $\mathtt{\max A_1' \leq ub \leq \max A_i \leq \min A_i^2 \Rightarrow A_1' \preceq A_i^2, \; i \in \{1,2\} }$
        \item $\mathtt{\max A_i^1 \leq \min A_i \leq lb \leq \min A_2' \Rightarrow  A_i^1 \preceq A_2', \; i \in \{1,2\} }$
        \item $\mathtt{A_i^1 \preceq A_1 \land A_i^1 \preceq A_2 \Rightarrow A_i^1 \preceq [\min (\min A_1,\; \min A_2)] \preceq A_1' \Rightarrow A_i^1 \preceq A_1' , \; i \in \{1,2\}   }$
        \item $\mathtt{A_1 \preceq A_i^2 \land A_2 \preceq A_i^2 \Rightarrow A_2' \preceq [\max (\max A_1,\; \max A_2)] \preceq A_i^2 \Rightarrow A_2' \preceq A_i^2 , \;i \in \{1,2\} }$
        \item $\mathtt{A_i^1 \preceq A_i \preceq A_j \Rightarrow A_i^1 \preceq A_i \land A_i^1 \preceq A_j \Rightarrow \max A_i^1 \leq \min(\min A_i,\; \min A_j) \Rightarrow A_i^1 \preceq A_1'}$ \\
              $\mathtt{A_i \preceq A_j \preceq A_j^2 \Rightarrow A_i \preceq A_j^2 \land A_j \preceq A_j^2 \Rightarrow  \max(\max A_i,\; \max A_j) \leq \min A_j^2 \Rightarrow A_j' \preceq A_j^2}$

    \end{enumerate}
\end{proof}

The proof shows that these limits are not only sufficient but necessary to guarantee the \emph{direct relations} on which sorting networks are essentially based. Counterexamples where a different limit selection leads to the failure of the sorting network can easily be found. 

All other producible information concerns {\it conditional relations} which depend on a yet unknown condition resulting in a disjunction or a conditional with unknown antecedent. 
For example $$(a_1 \leq a_2 \lor a_2 \leq a_1) \land a_1^1 \leq a_1 \leq a_1^2 \Rightarrow a_1^1 \leq a_1' \lor a_2' \leq a_1^2$$
For ordered or overlapping blocks we can easily verify that all these relations can be preserved, as every input element has a direct descendant, analogous to the original comparison element. In this case a {\it direct descendant} $\mathtt{A'}$ of a block $\mathtt{A}$ is bounded by the extrema of the parental block, meaning that $\mathtt{\min A \leq \min A'}$ and $\mathtt{\max A' \leq \max A}$. $\mathtt{A'}$ can but need not contain elements from $\mathtt{A}$ as well as elements which are not in $\mathtt{A}$ due to the fact that the property is defined through boundaries not elements. Therefore, when applying the comparison element, the boundaries of each block can at the most approach each other, leaving all relations preserved. An example is given in \autoref{fig:zerlegung3}.  
\begin{figure}[htb]
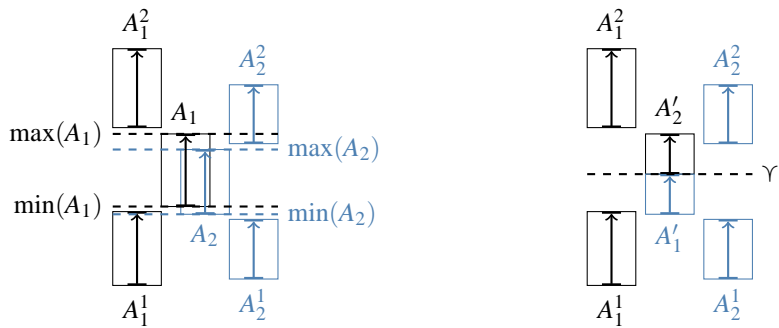

    \centering
    \begin{subfigure}[b]{0.348\textwidth}
        \centering
        \includegraphics[height=0.8\linewidth]{figures/zerlegung_block_05.tikz}
    \end{subfigure}
    \hspace{8mm}
    \begin{subfigure}[b]{0.348\textwidth}
        \centering
        \includegraphics[height=0.8\linewidth]{figures/zerlegung_block_06_neu.tikz} 
    \end{subfigure}
    \caption{Split of overlapping blocks. In this case the minimal (maximal) element of $A_2$ is smaller than the minimal (maximal) element of $A_1$. Thereby $A_2$ ``shrinks'' from above, meaning that the maximum element of $A_1'$ is smaller than $\max A_2$. This does not yet disclose any information about the number of elements in $A_1'$. $A_1$ ``shrinks'' from below. We can see $A_1'$ as the descendant of $A_2$ and $A_2'$ as the descendant of $A_1$. All relations are preserved.  } 
    \label{fig:zerlegung3}
\end{figure}

With encased blocks (cf.\ \autoref{fig:overlap2}) it is not necessarily possible to find a descendant for every element. If, for example, we have 
$ A_1^1 \preceq A_1 \preceq A_1^2 $ and $ A_2^1 \preceq A_2 \preceq A_2^2$ there might be no output element $A_i'$ with $A_1^1 \preceq A_i' \preceq A_1^2$ (cf.\ \autoref{fig:zerlegung3-2}). 
\begin{figure}[htb]
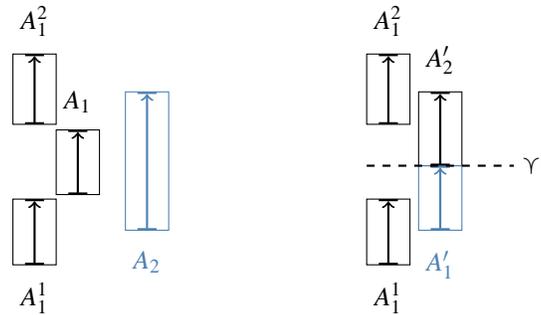

    \centering
    \begin{subfigure}[b]{0.3\textwidth}
        \centering
        \includegraphics[height=0.9\linewidth]{figures/pivotblock1.tikz}
    \end{subfigure}
    \begin{subfigure}[b]{0.3\textwidth}
        \centering
        \includegraphics[height=0.9\linewidth]{figures/pivotblock2.tikz} 
    \end{subfigure}
    \caption{Split of encased blocks. There are no direct descendants. $A_1' \preceq A_1^2$ and $A_1^1 \preceq A_2'$ but neither $A_1'$ nor $A_2'$ is between $A_1^1$ and $A_1^2$.} 
    \label{fig:zerlegung3-2}
\end{figure}

Unfortunately, a consequence of this is that this technique of merging and splitting blocks can not necessarily be transferred to a more general sorting algorithm. 
In particular this does not work with pivot based sorting algorithms. However it does with sorting networks because the comparison element does not compare one fixed element with another element but rather returns two sorted elements for which we do not know which input element is mapped to which output element. The information $A_1^1 \prec A_1$ is reduced to $A_1^1 \preceq A_2'$ plus some {\it conditional} information.
Some of this conditional information can no longer be guaranteed to hold but can not be used in a sorting network at all because of the limited operations of sorting networks. The relations of concern are  
\begin{equation}
    a_1 \leq a_2 \lor a_2 \leq a_1 \Rightarrow a_i^1 \leq a_1' \lor a_2' \leq a_i^2,\; i \in \{1,2\}
    \label{eq:6}
\end{equation}

Sorting networks as described above can not produce the additional information needed for this conditional information to become useful.

\begin{lemma}
    \label{lemma:comp2}
    Information about the \emph{conditional relations} (\ref{eq:6}) that can not be preserved by the altered comparison element $\mathtt{c'}$ can not be used by a sorting network. 
\end{lemma}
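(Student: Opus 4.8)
The plan is to show that the conditional information in~(\ref{eq:6}) is fundamentally inaccessible to a sorting network because of the two defining restrictions stated in \autoref{theo:1}: the network uses \emph{only} repeated applications of the comparison element, and its structure is \emph{fixed and data-independent} for a given input size. The key observation is that the troublesome conditional relation has the form of a disjunction whose disjuncts are tied to the \emph{identity} of the input elements (which of $a_1,a_2$ ended up as $a_1'$ and which as $a_2'$), not merely to their values. To use such information a later stage of the algorithm would have to branch on the unknown condition $a_1 \leq a_2$ versus $a_2 \leq a_1$ and apply different comparisons accordingly. I would argue that this is exactly what a fixed, data-independent comparison structure forbids.

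First I would make precise what it means for a sorting network to ``use'' a relation. A comparison element consumes two values and returns them in sorted order; crucially it discards the information of \emph{which} input slot each output came from. So after a comparison the only relation the network can act upon is the ordering $a_1' \leq a_2'$ between the two outputs, never a statement conditioned on the pre-comparison identity. I would formalise the conditional relation in~(\ref{eq:6}) as a disjunction $(a_i^1 \leq a_1') \lor (a_2' \leq a_i^2)$ in which the network cannot determine which disjunct holds without knowing whether $a_1 \leq a_2$ or $a_2 \leq a_1$ — and that determination is precisely the information erased by the comparison element.

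The core step is an argument by obliviousness. Because the positions of all comparison elements are fixed in advance and independent of the data, the sorting network cannot select a subsequent comparison based on whether a given conditional disjunct is true. To exploit the conditional information of~(\ref{eq:6}), the network would need at some later step to place a comparison element conditionally — on one pair of wires if $a_1 \leq a_2$ and on a different pair if $a_2 \leq a_1$. Since the wiring is fixed, no such data-dependent routing is available. I would make this concrete by taking two inputs that agree on all \emph{values} involved but differ on which physical element realises which value, so that one input makes a disjunct true and the other makes it false; the oblivious network applies the identical sequence of comparisons to both, hence cannot have distinguished them, so it never relied on the conditional relation.

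The main obstacle will be pinning down the quantifier structure so that the argument does not accidentally prove too much — namely, I must carefully separate the \emph{conditional} relations of~(\ref{eq:6}) from the \emph{direct} relations~(\ref{eq:1})--(\ref{eq:direct}), which \autoref{lemma:comp} already shows \emph{are} preserved and \emph{are} used. The delicate point is that the direct relations survive precisely because they hold regardless of which disjunct of $a_1 \leq a_2 \lor a_2 \leq a_1$ is true, so an oblivious network can safely rely on them; the conditional relations fail this test. I would therefore frame the conclusion as: any relation a sorting network can legitimately use must be invariant under the comparison element's loss of input identity, and~(\ref{eq:6}) is not, so it can never influence the fixed comparison structure and its loss under $\mathtt{c'}$ is harmless to correctness.
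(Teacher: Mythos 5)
Your argument is correct in substance and rests on the same foundation as the paper's sketch -- the obliviousness of sorting networks -- but it gets there by a genuinely different mechanism. The paper argues logically about what it takes to exploit a disjunction at all: since the condition of a conditional relation is unknown by definition, the disjunction in (\ref{eq:6}) can yield non-trivial information only via modus tollendo ponens (knowing one disjunct to be false) or when both disjuncts coincide; a sorting network can never produce a negative fact such as $a_i \nleq a_j$, nor equalise an output of a comparison element with another element, so both routes are closed. You instead argue operationally: the fixed, data-independent wiring means no later comparison can be placed conditionally on which disjunct holds, reinforced by an indistinguishability argument. The paper's route buys a sharper diagnosis of exactly which primitives are missing (negation and equality tests), which also explains why non-oblivious, e.g.\ pivot-based, algorithms \emph{can} use this information -- a point the paper relies on when discussing encased blocks. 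Your route buys a clean invariance principle (``usable information must be invariant under the comparison element's erasure of input identity''), and your closing observation that relations holding under \emph{both} disjuncts are precisely the direct relations of \autoref{lemma:comp} quietly covers the ``both disjuncts equal'' case that the paper treats as a separate branch. One caveat: your concrete construction -- two inputs agreeing on all values involved yet differing in the truth of a disjunct -- cannot work literally as stated, because the disjuncts of (\ref{eq:6}) are relations between values ($a_i^1$, $a_1'$, and so on) whose bounds arise from the network's own fixed history; if all values agree, so does the truth of every value relation, and what differs between your two inputs is only provenance. You should phrase the exchange argument in terms of swapped input positions over the same multiset of values, which is what your own invariance principle actually requires.
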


\begin{sproof}
    The condition of a \emph{conditional relation} is unknown by definition -- otherwise it would be a direct relation. Therefore the implication can not be used to gather additional information. The remaining disjunction can result in useful information in a non-trivial way only if one side of the disjunction is known to be false (modus tollendo ponens) or if both sides of the disjunction are equal. It is not possible to equalise an output element of the comparison element with another element and therefore it is not possible to test whether $a_i \leq a_j$ or not. In particular the information that $a_i \nleq a_j$ can not be produced for any $i$ and $j$. We can not test whether one side of such an equation is false or if both sides are equal and therefore can not use \emph{conditional relations}. Non-trivial, productive information from these disjunctions can only be used in non-oblivious algorithms.
\end{sproof}

\autoref{lemma:comp} and \autoref{lemma:comp2} imply that a comparison element $\mathtt{c'}$ as demanded in \autoref{theo:1} exists with the given limitations from \autoref{lemma:comp}. Therefore all usable information is preserved and this technique of merging and splitting two blocks in a comparison element can be used with every sorting network.  

If the elements inside the blocks are sorted, we can define a linear time comparison element that splits the two blocks into blocks as equal in size as possible. An implementation of such a comparison element can be found in \autoref{bsp}, \autoref{code:sce}. 
Balancing the blocks is advantageous in many cases because it limits the maximal block size to the size of the largest block in the initial sequence. This is beneficial especially in the situation of limited memory for different parts of the parallelised algorithm, for example if the parallelization is done with a computer cluster. By preserving the inner sorting of the blocks, the resulting sequence of the sorting network can be easily combined to a completely sorted sequence by concatenation.   

Every suitable sorting algorithm can be used for the initial sorting inside the blocks. Consequently the sorting network can be used as a skeleton to parallelise arbitrary sorting algorithms and work as the merging stage of the newly combined (parallel) algorithm. A concept that will prove it's worth in the following example. 

\section{Application of the Agglomeration Law on the Bitonic Sorter}   
\label{bsp}
We will now apply the agglomeration law to Batcher's Bitonic Sorting Network. 
    It is a recursively constructed sorting network that works in two steps. In the first step an unsorted sequence (of length $2^l$ with $l \in \mathbb{N}$) is transformed into a bitonic sequence. A bitonic sequence is the juxtaposition of an ascending and a descending monotonic sequence or the cyclic rotation of the first case (\autoref{fig:bitoniclists}). 

    \begin{figure}[htb]
        \centering
        \begin{tikzpicture}
            \begin{axis}[axis lines=middle, ticks=none, xmin=0, ymin=0, xmax=10, ymax=10, axis line style={->}, width=0.25\textwidth]
                \addplot[
                    black,
                    every mark/.append style={solid, fill=lightgray}, mark=square*, mark size = 1,
                    nodes near coords, 
                ] coordinates {(1,1) (2,4) (3,5) (4,8) (5,7) (6,6) (7,3) (8,2)};
            \end{axis}
        \end{tikzpicture}
        \hspace{8mm}
        \begin{tikzpicture}
            \begin{axis}[axis lines=middle, ticks=none, xmin=0, ymin=0, xmax=10, ymax=10, axis line style={->}, width=0.25\textwidth]
                \addplot[
                    black,
                    every mark/.append style={solid, fill=lightgray}, mark=square*, mark size = 1,
                    nodes near coords, 
                ] coordinates {(1,3) (2,2) (3,1) (4,4) (5,5) (6,8) (7,7) (8,6)};
            \end{axis}
        \end{tikzpicture}
        \caption{Examples of bitonic sequences.}
        \label{fig:bitoniclists}
    \end{figure}
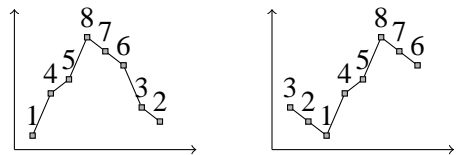

The bitonic sequence is thereafter sorted by a {\em Bitonic Merger}. We will call the function implementing this Bitonic Merger \cd{bMerge} and the function transforming an unsorted sequence into a bitonic sequence \cd{prodBList}. The Bitonic Sorter works with the nested divide-and-conquer scheme of the sorting-by-merging idea. This means that the repeated generation of shorter sorted lists is done by Bitonic Sorters of smaller size.  
A Bitonic Sorter for eight input elements is depicted in \autoref{fig:bitmisch8}. 
\begin{figure}[htb]
    \centering
    \includegraphics[width=0.90\textwidth]{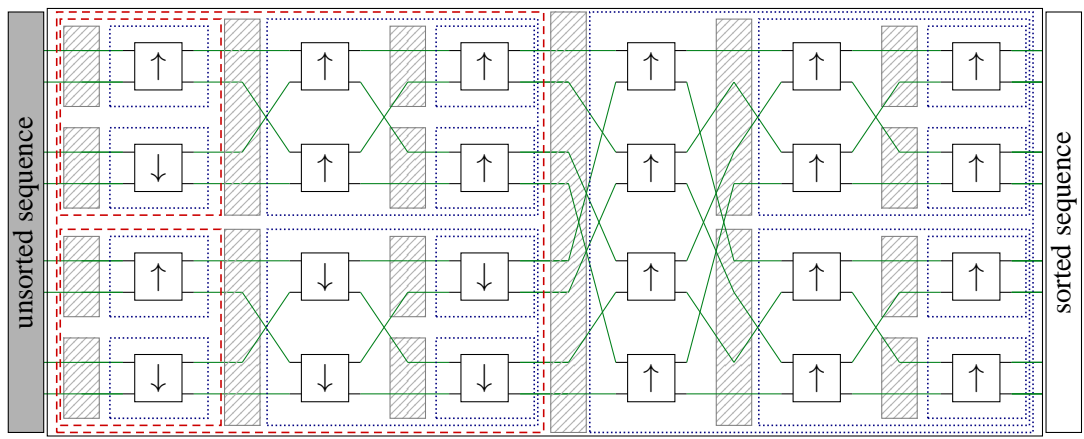}
    \caption{Bitonic Sorter of order 8. The function \cd{prodBList} is represented by a red dashed rectangle, the function \cd{bMerge} by a blue dotted one. Bitonic Sequences are represented by shaded rectangles.}
    \label{fig:bitmisch8}
\end{figure}

The basic component of the sorting network -- the original comparison element -- can be defined as:
\lstinputlisting[language=Haskell,breaklines=true,xleftmargin=0pt,
xrightmargin=0pt, style=stHaskell, linerange=Komparatorelement-\-\-\ Hilfsfunktionen, includerangemarker=false,caption={Original comparison element}]{./code/batcher_0.hs}

We will use a two-element-list variant instead of pairs for reasons of code elegance. We define the actual algorithm using the Eden~\cite{loogen05,loogen11} programming language which extends Haskell by the concept of parallel {\it processes} with an implicit communication as well as a Remote Data~\cite{rd10} concept. We can instantiate a process that is defined by a given function with \cd{(\$\#)}:

\begin{code}
($#) :: (Trans b, Trans a) => 
        (a -> b)     -- Process function
        -> a -> b    -- Process input and output 
\end{code}

The class \cd{Trans} consists of {\it transmissible} values. The expression \cd{f \$\# expr} with some function \cd{f :: a -> b} will create a (remote) child process. The expression \cd{expr} will be evaluated (concurrently by a new thread) in the parent process and the result \cd{val} will be sent to the child process. The child process will evaluate \cd{f \$ val} (cf.\ \autoref{fig:process}).

\begin{figure}[htb]
    \centering
    \includegraphics[width=0.56\textwidth]{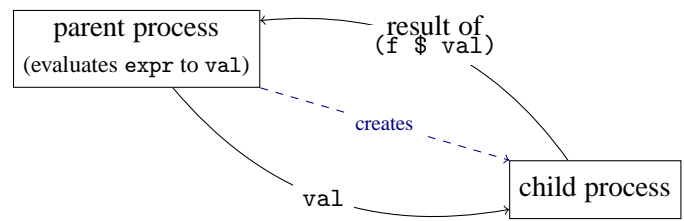} 
    \caption[Schema Prozessinstanziierung]{The scheme for process instantiation. Source: \protect \cite{loogen11}}
    \label{fig:process}
\end{figure}

Hereafter we will essentially use Eden's \cd{parMapAt}, a parallel variant of \cd{map} with explicit placement of processes on processor elements (PEs), also called (logical) machines, which are numbered from 1 to the number of processor elements. 

\begin{code}
parMapAt :: (Trans a, Trans b) => 
               [Int]      -- ^places for instantiation          
            -> (a -> b)   -- ^worker function
            -> [a] -> [b] -- ^task list and ^result list
\end{code}

The explicit placement is determined by the first argument, a list of PE numbers specifying the places where the processes will be deployed.
Additionally we will use the constants \cd{noPe} and \cd{selfPe} provided by Eden to calculate the correct placements:  

\begin{code}
noPe   :: Int -- Number of (logical) machines in the system
selfPe :: Int -- Local machine number (ranges from 1 to noPe)
\end{code}

For our implementation we will place each comparison element of the same row on the same PE. In \autoref{code:bit2-bsort} a parallel definition of the algorithm is given. 
\lstinputlisting[language=Haskell,breaklines=true,xleftmargin=6pt,breakindent=32pt,
xrightmargin=0pt, style=stHaskell, linerange=\-\-\ bitonic\ sorter-\-\-\ bitonic\ merger, includerangemarker=false, numbers=left, name=numbered, label={code:bit2-bsort}, caption={Parallel bSort}]{./code/simple_bitonic_sorter.hs}
The \cd{bSort} function takes three arguments: an oriented comparison element, a \cd{Direction} denoting whether the result should be sorted in an ascending or descending order and an input list. The main part of the algorithm is a composition of the \cd{prodBList} and the \cd{bMerge} function (cf.\ \autoref{line:main} in \autoref{code:bit2-bsort}). 
The \cd{prodBList} function splits the input list and sorts both parts with the Bitonic Sorter, one half ascending and one half descending (cf.\ \autoref{mark:number}). It uses two helper functions \cd{splitHalf} and \cd{unSplit}. 
With the help of Eden's \cd{splitIntoN}, which splits the input list blockwise into as many parts as the first parameter determines, 
we define:

\begin{code}
splitHalf :: [a] -> [[a]]
splitHalf = splitIntoN 2
\end{code}

Both resulting lists are of the same size because the width of the Bitonic Sorter and therefore its input list's length are powers of two (not to be confused with the size of the blocks which can be of arbitrary size). The needed reverse function -- \cd{unSplit} -- can be defined  as:

\begin{code}
unSplit :: [[a]] -> [a]
unSplit = concat
\end{code}

The correct placement by line is calculated depending on the width of the sorting network (cf.\ \autoref{line:hno1}). Two elements are needed for every comparison element, therefore \cd{hcc} is half the size of the sorting network in the actual recursion step. 
The \cd{bMerge} function does have the same type signature as the \cd{bSort} function but the input list must be a bitonic list for the function to work correctly: 

\lstinputlisting[language=Haskell,breaklines=true,xleftmargin=6pt,breakindent=32pt,
xrightmargin=0pt, style=stHaskell, linerange=\-\-\ bitonic\ merger-\-\-\ Stones, includerangemarker=false, numbers=left, name=numbered, label={code:bit2-bsort2}, caption={Parallel bMerge}]{./code/simple_bitonic_sorter.hs}

The main part of the \cd{bMerge} function is the function \cd{bSplit} which splits a bitonic sequence into two bitonic sequences with an order between each other. This function uses a communication structure referred to as a {\it perfect shuffle}\footnote{This structure can be found in various algorithms, e.g.\ in the Fast Fourier transform or in matrix transpositions.} 
by Stone~\cite{stone71}. With this communication scheme the element $i$ and $i+\frac{p}{2}$ are compared, resulting in a split depicted in \autoref{fig:zerlegung}. 

\begin{figure}[htb]
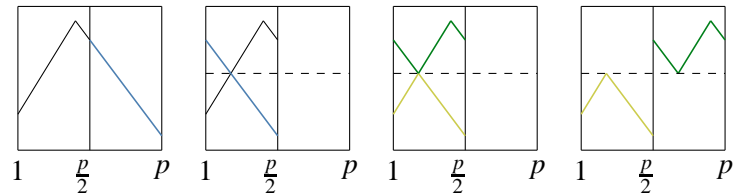

    \centering
    \begin{subfigure}[b]{0.15\textwidth}
        \centering
        \includegraphics[width=0.99\linewidth]{figures/zerlegung_01.tikz}
    \end{subfigure}
    \begin{subfigure}[b]{0.15\textwidth}
        \centering
        \includegraphics[width=0.99\linewidth]{figures/zerlegung_02.tikz}
    \end{subfigure}
    \begin{subfigure}[b]{0.15\textwidth}
        \centering
        \includegraphics[width=0.99\linewidth]{figures/zerlegung_03.tikz}
    \end{subfigure}
    \begin{subfigure}[b]{0.15\textwidth}
        \centering
        \includegraphics[width=0.99\linewidth]{figures/zerlegung_04.tikz}
    \end{subfigure}
    \caption{Concept of splitting a bitonic sequence.}
    \label{fig:zerlegung}
\end{figure}

In Eden this {\it perfect shuffle} is easily defined with the help of the offered auxiliary functions: 
\begin{code}
-- Round robin distribution and inverse function called shuffle 
unshuffle :: Int -> [a] -> [[a]]
shuffle :: [[a]] -> [a]
\end{code}

The first parameter of \cd{unshuffle} specifies the number of sublists in which the list is split, e.g.:
\begin{code}
unshuffle 3 [1..10] = [[1,4,7,10],[2,5,8],[3,6,9]]
shuffle [[1,4,7,10],[2,5,8],[3,6,9]] = [1..10]
\end{code}

The {\it perfect shuffle} is then defined as:
\lstinputlisting[language=Haskell,breaklines=true,xleftmargin=0pt,
      xrightmargin=0pt, style=stHaskell, linerange=\-\-\ Stones-\-\-\ , includerangemarker=false]{./code/batcher_0.hs}

A direct communication between consecutive comparison elements can be realised with Eden's Remote Data concept in which a smaller handle is transmitted instead of the actual data. 
The data itself is fetched directly when needed from the PE where the handle was created.  
The more intermediate steps are involved, the more effective the benefits of this concept become.
This can be done by the operations: 
\begin{code}
type RD a -- remote data
-- converts local data into corresponding remote data and vice versa 
release :: Trans a => a -> RD a
fetch :: Trans a => RD a -> a
releaseAll :: Trans a => [a] -> [RD a] -- list variants
fetchAll :: Trans a => [RD a] -> [a]
\end{code}
In \autoref{fig:rd} the communication scheme of a Remote Data connection is pictured.
\begin{figure}[h]
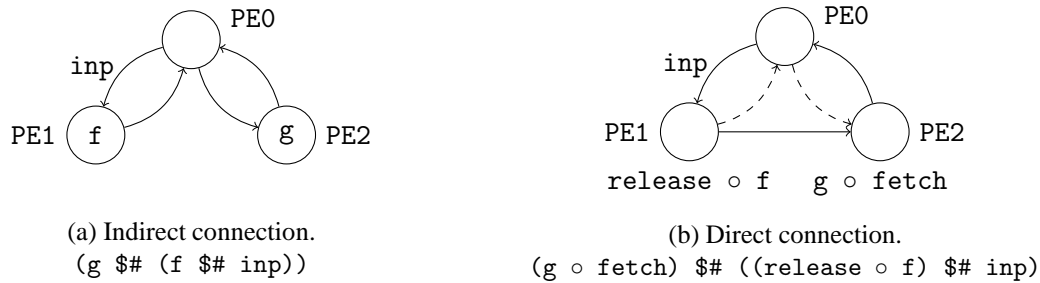

    \centering
    \begin{subfigure}{0.45\textwidth}
        \captionsetup{justification=centering}
        \centering
        \includegraphics[width=.7\linewidth]{figures/rd_01.tikz}
        \caption[Indirekte Verbindung]{Indirect connection. \\ \protect \cd{(g \$\# (f \$\# inp))}}
        \label{fig:rda}
    \end{subfigure}
    \hspace{0.5cm}
    \begin{subfigure}{0.45\textwidth}
        \centering
        \captionsetup{justification=centering}
        \includegraphics[width=.7\linewidth]{figures/rd_02.tikz}
        \caption{Direct connection. \\ \protect \cd{(g . fetch) \$\# ((release . f) \$\# inp)}}
        \label{fig:rdb}
    \end{subfigure}
    \caption[Remote Data Scheme]{Remote Data scheme. Source: \protect \cite{loogen11}. The processes computing the results of \cd{f} and \cd{g} are placed on two different \cd{PEs}. Without RD, the result of \cd{f} is transferred via the parental process. 
    With RD a handle is generated on \cd{PE1} and transferred via \cd{PE0} to \cd{PE2}, the actual result is transferred directly. }
    \label{fig:rd}
\end{figure}

If we call the \cd{bSort} function with the original comparison element and the organization of the communication via Remote Data we receive a correct implementation of the Bitonic Sorter:  

\lstinputlisting[language=Haskell,breaklines=true,xleftmargin=6pt,breakindent=32pt,
xrightmargin=0pt, style=stHaskell, includerangemarker=false, linerange=\-\-\ simpleSort-\-\-\ original, numbers=left, name=numbered, label={code:bsneu}, caption={Parallel variant of the original Bitonic Sorter}]{./code/simple_bitonic_sorter.hs}

To apply the agglomeration law we can change the comparison element to a suitable comparison element for blocks. In \autoref{code:sce} a simple implementation is given: 

\lstinputlisting[language=Haskell, breaklines=true,xleftmargin=6pt,
    xrightmargin=0pt, style=stHaskell, numbers=left, linerange=\-\-\ simpleMergeSplit-\-\-\ simpleSort, includerangemarker=false, label={code:sce}, caption={A simple MergeSplit function working as a comparison element for blocks of data}]{./code/simple_bitonic_sorter.hs}

In \autoref{code:bsblock} an optimised implementation is given. It uses unboxed vectors to optimise transmissions. For reasons of comparability, we use the list variant of the altered comparison element and the merge sort from \cd{Data.List}. 

\lstinputlisting[language=Haskell,breaklines=true,xleftmargin=6pt,breakindent=32pt,
xrightmargin=0pt, style=stHaskell, includerangemarker=false, linerange=\-\-\ block-\-\-\ fused, numbers=left, name=numbered, label={code:bsblock}, caption={Combination of mergesort with the Bitonic Sorter}]{./code/simple_bitonic_sorter.hs}
This simple adaption results in a hybrid sorting algorithm parallelising merge sort with the Bitonic Sorter. 

\subsection{Runtime and Speedup Evaluation}
We tested the algorithms on a multicore computer called Hex and on the Beowulf Cluster\footnote{Hex is equipped with an {\ttfamily AMD Opteron CPU 6378} (64\,cores) and 64\,GB memory, the Beowulf cluster at the Heriot-Watt University Edinburgh consists of 32 nodes, each one equipped with an {\ttfamily Intel Xeon E5504} CPU (8\,Cores) and 12\,GB memory.}
in order to compare two different implementations of Eden: with MPI~\cite{mpi} as a middleware on the Beowulf cluster and an optimised implementation on the multicore computer.
First we will compare the above-mentioned parallelization of merge sort  using the Bitonic Sorter to another parallelization of the same merge sort using the disDC divide-and-conquer skeleton from Eden's skeleton library. Both variants are implemented in Eden and equipped with similar improvements. We will work on lists in particular since they are the common choice of data structure in Haskell but use unboxed vectors for transmissions. 
In \autoref{fig:hex_time1} the runtime graphs of the parallel disDC merge sort and the Bitonic Sorter are depicted. 
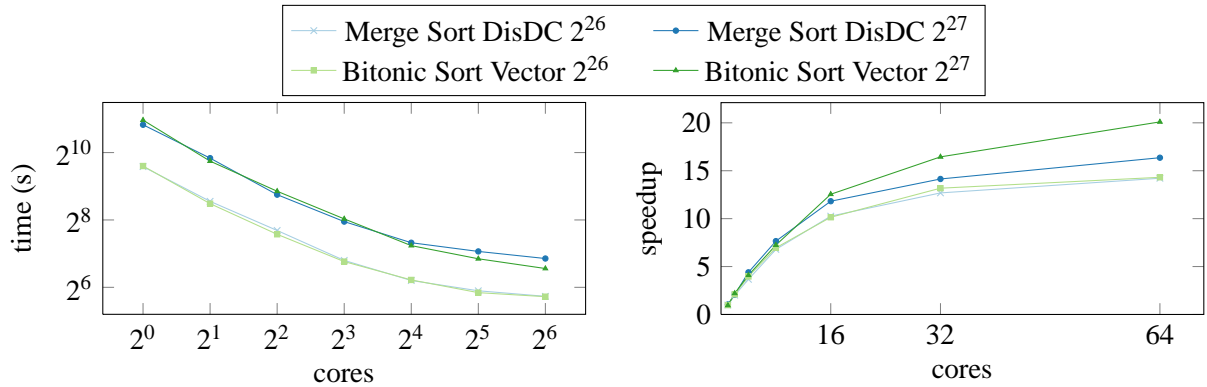
\begin{figure}[htbp]
    \centering
    \begin{tikzpicture}
        \begin{groupplot}[group style={group size=2 by 1, horizontal sep=1.8cm}]
            \nextgroupplot[
                cycle list name=colorlist,
                legend style={align=left,anchor=south,at={(1.1,1.03)},/tikz/every even column/.append style={column sep=0.5cm}},
                legend columns=2,
                ylabel style={align=left},
                width=0.5\textwidth,
                height=0.275\textwidth,
                xmode=log,
                log basis x={2},
                ymode=log,
                log basis y={2},
                xtick={1,2,4,8,16,32,64,128,256},
                xlabel={cores},
                ylabel={time (s)}
            ]
            \addplot table {./graphs/mergesort_min_time_disDCVecChunk_67108864_1000.data};
            \addlegendentry{Merge Sort DisDC $2^{26}$}
            \addplot table {./graphs/mergesort_min_time_disDCVecChunk_134217728_1000.data};
            \addlegendentry{Merge Sort DisDC $2^{27}$}
            \addplot table {./graphs/bitonic-skeleton_min_time_vector_67108864-.data};
            \addlegendentry{Bitonic Sort Vector $2^{26}$}
            \addplot table {./graphs/bitonic-skeleton_min_time_vector_134217728-.data};
            \addlegendentry{Bitonic Sort Vector $2^{27}$};
            \nextgroupplot[
                ymin=0,xmin=0,
                cycle list name=colorlist,
                legend style={align=left,anchor=west},
                ylabel style={align=left},
                width=0.5\textwidth,
                height=0.275\textwidth,
                xtick={16,32,64,128,256},
                xlabel={cores},
                ylabel={speedup}
            ]
            \addplot table {./graphs/mergesort_abs_speedup_min_time_disDCVecChunk_67108864_1000.data};
            \addplot table {./graphs/mergesort_abs_speedup_min_time_disDCVecChunk_134217728_1000.data};
            \addplot table {./graphs/bitonic-skeleton_abs_speedup_min_time_vector_67108864-.data};
            \addplot table {./graphs/bitonic-skeleton_abs_speedup_min_time_vector_134217728-.data};
        \end{groupplot}
    \end{tikzpicture}
    \caption{Runtime and Speedup of the Bitonic Sorter and merge sort on Hex with $2^{26}$ and $2^{27}$ elements.}
    \label{fig:hex_time1}
\end{figure}

The graphs indicate that although the respective runtimes are fairly similar, the Bitonic Sorter variant scales better for larger inputs. The assumption can be hardened by the examination of the corresponding (absolute) speedups.  
The better scalability of the Bitonic Sorter can partly be explained by the merging, consisting of many small steps with comparison elements. This concept of merging can benefit from a great number of \cd{PEs}.  
A discovery that can also be made on the Beowulf Cluster though it is notable that here the perceived characteristics are even more pronounced (cf.\ \autoref{fig:beowulf_time1}). 
\begin{figure}[htbp]
    \centering
    \begin{tikzpicture}
        \begin{groupplot}[group style={group size=2 by 1, horizontal sep=1.8cm}]
            \nextgroupplot[
                cycle list name=colorlist,
                legend style={align=left,anchor=south,at={(1.1,1.03)},/tikz/every even column/.append style={column sep=0.5cm}},
                legend columns=2,
                ylabel style={align=left},
                width=0.5\textwidth,
                height=0.275\textwidth,
                xmode=log,
                log basis x={2},
                ymode=log,
                log basis y={2},
                xtick={1,2,4,8,16,32,64,128,256},
                xlabel={cores},
                ylabel={time (s)}
            ]
            \addplot table {./graphs/mergesort_min_time_disDCVecChunk_16777216_1000.data};
            \addlegendentry{Merge Sort disDCVecChunk $2^{24}$}
            \addplot table {./graphs/mergesort_min_time_disDCVecChunk_33554432_1000.data};
            \addlegendentry{Merge Sort disDCVecChunk $2^{25}$}
            \addplot table {./graphs/bitonic-skel_min_time_vector_16777216-.data};
            \addlegendentry{Bitonic Sort Vector $2^{24}$}
            \addplot table {./graphs/bitonic-skel_min_time_vector_33554432-.data};
            \addlegendentry{Bitonic Sort Vector $2^{25}$}
            \nextgroupplot[
                ymin=0,xmin=0,
                cycle list name=colorlist,
                legend style={align=left,anchor=west},
                ylabel style={align=left},
                width=0.5\textwidth,
                height=0.275\textwidth,
                xtick={32,64,128,256},
                xlabel={cores},
                ylabel={speedup}
            ]
            \addplot table {./graphs/mergesort_abs_speedup_min_time_disDCVecChunk_16777216_1000.data};
            \addplot table {./graphs/mergesort_abs_speedup_min_time_disDCVecChunk_33554432_1000.data};
            \addplot table {./graphs/bitonic-skel_abs_speedup_min_time_vector_16777216-.data};
            \addplot table {./graphs/bitonic-skel_abs_speedup_min_time_vector_33554432-.data};
        \end{groupplot}
    \end{tikzpicture}
    \caption{Bitonic Sorter / merge sort hybrid compared to a traditional merge sort on the Beowulf Cluster.}
    \label{fig:beowulf_time1}
\end{figure}
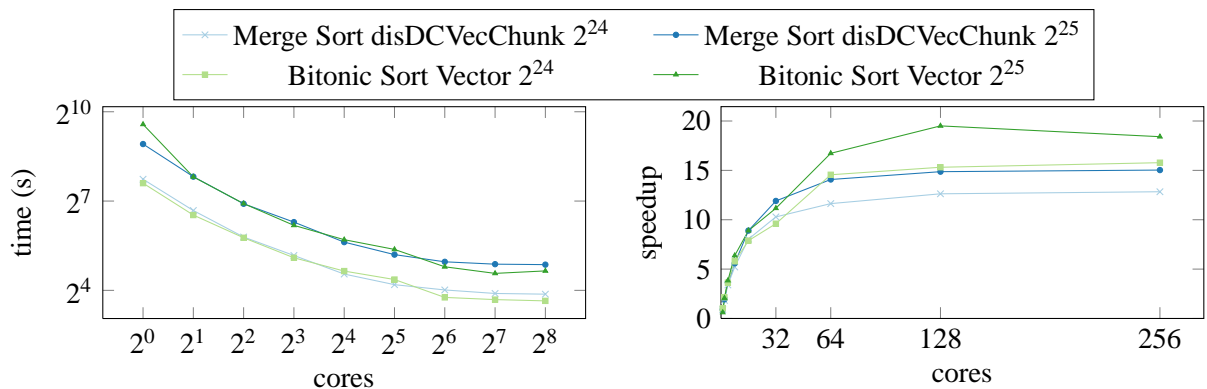

On the Beowulf Cluster the communication between different \cd{PEs} located on the same computer is cheap while intercommunication between computers is comparatively slow. 
The local communication structure of the bitonic sorting network is well-suited to this setting. The fixed communication structure of the Bitonic Sorter allows for an accurate process placement where the structure of the Bitonic Sorter is aligned to the structure of the cluster.

Another remarkable property of the bitonic sorting network is the potential of working with distributed input and output. The algorithm can work with distributed data without the need to aggregate the data. This is particularly interesting for very large sets of data. 
We will therefore compare the bitonic sorter to the PSRS algorithm~\cite{psrs93}, a parallel variant of quicksort with an elaborated pivot selection which guarantees a well-balanced distribution of the resulting lists. 
A comparison to PSRS is well-suited because the algorithmic structures are rather similar. 
In \autoref{fig:beowulf_time2} the runtime graphs of the PSRS algorithm and the Bitonic Sorter are depicted. The algorithms are modified to work with distributed data, only the sorting time without data distribution and collection is measured. 

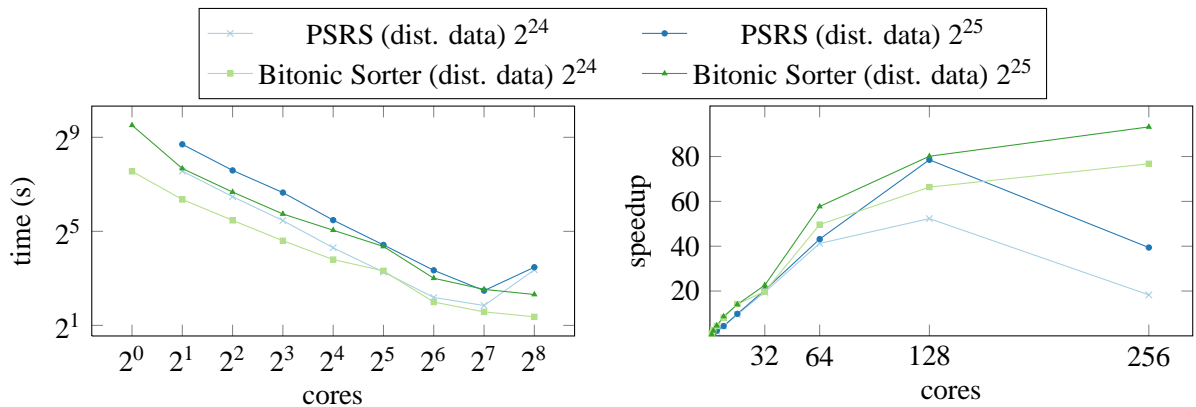
\begin{figure}[htbp]
    \centering
    \begin{tikzpicture}
        \begin{groupplot}[group style={group size=2 by 1, horizontal sep=1.8cm}]
            \nextgroupplot[
                cycle list name=colorlist,
                legend style={align=left,anchor=south,at={(1.1,1.03)},/tikz/every even column/.append style={column sep=0.5cm}},
                legend columns=2,
                ylabel style={align=left},
                width=0.5\textwidth,
                height=0.29\textwidth,
                xmode=log,
                log basis x={2},
                ymode=log,
                log basis y={2},
                xtick={1,2,4,8,16,32,64,128,256},
                xlabel={cores},
            ylabel={time (s)}]
            \addplot table {./graphs/mergesort_min_time_psrsDistribCTimeV_16777216_1000.data};
            \addlegendentry{PSRS (dist. data) $2^{24}$}
            \addplot table {./graphs/mergesort_min_time_psrsDistribCTimeV_33554432_1000.data};
            \addlegendentry{PSRS (dist. data) $2^{25}$}
            \addplot table {./graphs/bitonic-skel_min_time_bsDistribCTime_16777216-.data};
            \addlegendentry{ Bitonic Sorter (dist. data) $2^{24}$}
            \addplot table {./graphs/bitonic-skel_min_time_bsDistribCTime_33554432-.data};
            \addlegendentry{ Bitonic Sorter (dist. data) $2^{25}$}
            \nextgroupplot[
                ymin=0,xmin=0,
                cycle list name=colorlist,
                legend style={align=left,anchor=west},
                ylabel style={align=left},
                width=0.5\textwidth,
                height=0.29\textwidth,
                xtick={32,64,128,256},
                ytick={20,40,60,80},
                xlabel={cores},
                ylabel={speedup}
            ]
            \addplot table {./graphs/mergesort_abs_speedup_min_time_psrsDistribCTimeV_16777216_1000.data};
            \addplot table {./graphs/mergesort_abs_speedup_min_time_psrsDistribCTimeV_33554432_1000.data};
            \addplot table {./graphs/bitonic-skel_abs_speedup_min_time_bsDistribCTime_16777216-.data};
            \addplot table {./graphs/bitonic-skel_abs_speedup_min_time_bsDistribCTime_33554432-.data};
        \end{groupplot}
    \end{tikzpicture}
    \caption{Runtime and Speedup of merge sort with the Bitonic Sorter as merge stage compared to a traditional merge sort on the Beowulf Cluster.} 
    \label{fig:beowulf_time2}
\end{figure}

Again, the bitonic sorter scales well in comparison to the PSRS algorithm. With an increasing number of PEs the all-to-all communication of the PSRS algorithm becomes more expensive. 

\section{Related Work}
\label{relwork}
There have been some newer approaches to sorting networks often in combination with hardware accelerators like FPGAs~\cite{mueller2012} or GPUs~\cite{gov2005}. In particular GPGPU programming has led to a little renaissance of sorting networks, especially with different implementations of the Bitonic Sorter~\cite{purcell2005,gov2006,kipfer2005} achieving good results. However these approaches usually either implement the bitonic sorter in the original way as presented by Batcher or sometimes implement the Adaptive Bitonic Sorter~\cite{bilardi1989} instead. The latter is a data dependent variant of the Bitonic Sorter and therefore not a sorting network. Consequently the work presented in this paper is closer to the different approaches of hybrid sorting algorithms. 
There are numerous examples for the benefit of hybrid sorting algorithms, for example in \cite{sintorn2008} a hybridization of Bucketsort and merge sort yields good results. 
Some ideas of this work were motivated by Dieterle's~\cite{mischa2016} work on skeleton composition.  

\section{Conclusion and Future Work}
\label{conc}
We have presented a different approach of agglomeration for sorting networks. This technique equips us with the possibility of using sorting networks as a parallel merging stage for arbitrary sorting algorithms, which is a versatile, easily adaptable and very promising approach. We are convinced that further improvements to the given example application are possible. 
We will investigate different possibilities of constructing combinations of arbitrary sorting algorithms with sorting networks. Therefore we will consider possible connections to embedded languages that allow for GPGPU programming from Haskell such as Accelerate~\cite{accelerator}
or Obsidian~\cite{svensson08}
or the possibility of combining the concise and easy to maintain functional implementation of sorting networks with efficient sorting algorithms for example via Haskell's Foreign Function Interface. 
Furthermore, most of the findings of this paper are applicable to other sorting networks such as Batcher's Odd-Even-Mergesort.  
All further investigations could benefit from a cost model that enables for better runtime predictions.  

\section*{Acknowledgments} 
The author thanks \mbox{Rita} \mbox{Loogen} and the anonymous reviewers for their helpful comments on a previous version of this paper and \mbox{Wolfgang} \mbox{Loidl}, \mbox{Prabhat} \mbox{Totoo} and \mbox{Phil} \mbox{Trinder} for giving us access to their Beowulf Cluster.

\bibliographystyle{eptcs}
\bibliography{Literaturverzeichnis}

\appendix
%
\end{document}